\theoremstyle{definition}
\newtheorem{lemma}{Lemma}
\newtheorem{claim}{Claim}
\newtheorem{example}{Example}
\newtheorem{remark}{Remark}
\newtheorem{definition}{Definition}
\def\BibTeX{{\rm B\kern-.05em{\sc i\kern-.025em b}\kern-.08em
    T\kern-.1667em\lower.7ex\hbox{E}\kern-.125emX}}
\newcommand{\frakC}{\mathfrak{C}}
\newcommand{\bfA}{\mathbf{A}}
\newcommand{\bfB}{\mathbf{B}}
\newcommand{\bfG}{\mathbf{G}}
\newcommand{\bfx}{\mathbf{x}}
\newcommand{\bfv}{\mathbf{v}}
\newcommand{\bbF}{\mathbb{F}}
\newcommand{\bfy}{\mathbf{y}}
\begin{document}

\title{Universally Decodable Matrices for Distributed Matrix-Vector Multiplication}
\author{\IEEEauthorblockN{Aditya Ramamoorthy, Li Tang}
	\IEEEauthorblockA{Department of Electrical and Computer Engineering \\
		Iowa State University\\
		Ames, IA 50010, U.S.A. \\
	}\and
	\IEEEauthorblockN{Pascal O. Vontobel}
	\IEEEauthorblockA{Department of Information Engineering \\
		The Chinese University of Hong Kong\\
		Hong Kong, S. A. R. \\
	}
	\thanks{This work was supported in part by the National Science Foundation (NSF) under grant CCF-1718470.}
}

\maketitle

%

\begin{abstract}
	Coded computation is an emerging research area that leverages concepts from erasure coding to mitigate the effect of stragglers (slow nodes) in distributed computation clusters, especially for matrix computation problems. In this work, we present a class of distributed matrix-vector multiplication schemes that are based on codes in the Rosenbloom-Tsfasman metric and universally decodable matrices. Our schemes take into account the inherent computation order within a worker node. In particular, they allow us to effectively leverage partial computations performed by stragglers (a feature that many prior works lack). An additional main contribution of our work is a companion matrix-based embedding of these codes that allows us to obtain sparse and numerically stable schemes for the problem at hand. Experimental results confirm the effectiveness of our techniques.
\end{abstract}

\section{Introduction}
Distributed computation clusters are routinely used in domains such as machine learning and scientific computing. 
In these applications, datasets are often so large that they cannot be housed in the disk of a single server. Furthermore, processing the data on a single server is either infeasible or unacceptably slow. Thus, the data and the processing is distributed and processed across a large number of nodes.

While large clusters have numerous advantages, they also present newer operational challenges.
These clusters (which can be heterogeneous in nature) suffer from the problem of ``stragglers" which are defined as slow nodes (node failures are an extreme form of a straggler). It is evident that the overall speed of a computation on these clusters is typically dominated by stragglers in the absence of a sophisticated assignment of tasks to the worker nodes.


In recent years, approaches based on coding theory (referred to as ``coded computation") have been effectively used for straggler mitigation
\cite{yu2017polynomial, tang2018bound, lee2017high, lee2018speeding, duttaCG16, mallick2018rateless, wangLS18, kia2018exploit, das2018c}. Coded computation offers significant benefits for specific classes of problems, e.g., matrix computations. We illustrate this by means of a matrix-vector multiplication example in Fig. \ref{fig:Q_example}, where a matrix $\bfA$ is block-row decomposed as $\bfA^T = [\bfA_0^T ~\bfA_1^T~\bfA_2^T]^T$. Each worker node is given the responsibility of computing two submatrix-vector products so that the computational load on each worker is $2/3$-rd of the original. It can be observed that even if one worker fails, there is enough information for a master node to compute the final result. However, this requires the master node to solve simple systems of equations. This approach can be generalized (and also adapted for matrix multiplication) by using Reed-Solomon (RS) code like approaches \cite{yu2017polynomial, tang2018bound, lee2017high, lee2018speeding, duttaCG16}. These methods allow the master node to recover $\bfA \bfx$ if any $\tau$ of the worker nodes complete their computation; $\tau$ is called the recovery threshold.


A significant amount of prior work treats stragglers as node failures (see \cite{kia2018exploit,das2018c,mallick2018rateless} for exceptions), or, equivalently from the point of view of coding theory, as erasures. This matches the conventional erasure coding problem very well and allows the adaptation of well-known approaches, e.g, RS codes to the problem of distributed matrix computations. However, there are certain features of the distributed matrix-vector multiplication problem that distinguish it from classical erasure correction that we now discuss.
\begin{itemize}
	\item {\it Leveraging partial computation performed by stragglers.} Each worker node operates in a sequential fashion on its assigned rows, e.g., in Fig. \ref{fig:Q_example}, worker $W_0$, first computes $\hat{\bfA}_{00}\bfx$ and only then $\hat{\bfA}_{01}\bfx$. If node $0$ is a straggler (but not a failure), ignoring the partial computation it performs will be wasteful.
	\item {\it Numerically stable decoding.} The RS-based approach requires the master node to solve a real Vandermonde system of linear equations or equivalently perform polynomial interpolation. It is well recognized that real Vandermonde matrices have a rather large condition number\footnote{While there is literature on choosing good evaluation points to reduce the condition number, in the distributed matrix vector multiplication context, we require decoding from any $\tau$ evaluation points. This makes the worst case condition number quite bad.} which translates into significant numerical issues in recovering $\bfA \bfx$. This numerical issue is especially important in Krylov subspace methods for solving large linear systems of equations \cite{simoncini2007recent} (which repeatedly compute matrix-vector products) and in machine learning,  where gradient computations are often approximate. 

	
	\item {\it Dealing with sparse $\bfA$ matrices.} The case when the matrix $\bfA$ is sparse is often an important one in practice. RS-based approaches typically generate submatrices that are sent to the worker nodes by combining a large number of rows of $\bfA$, thus destroying the inherent sparsity of the problem. This can significantly increase the computation time \cite{wangLS18} at the worker nodes. Thus, techniques that only require sparse combinations of the rows of $\bfA$ are of great interest.
\end{itemize}
\subsection{Main contributions of our work}
We present a class of distributed matrix-vector multiplication schemes that provably leverage partial computations by stragglers, while possessing a numerically stable decoding algorithm. These schemes are related to codes in the Rosenbloom-Tsfasman metric \cite{rosenbloomT97} and universally decodable matrices (UDM) \cite{ganesan2007existence} that were presented in different contexts. Roughly speaking, while the RS-based approach corresponds to polynomial evaluation/interpolation, our approach can be viewed as working with polynomials with roots of higher multiplicity. An additional main contribution of our work is the usage of companion matrices \cite{ward94matrix} that allow for an embedding of finite-field matrices into the real field; this significantly improves the condition numbers of the relevant matrices. 

\section{Problem Formulation}
\label{sec:prob_formu}



We consider a scenario where the master node has a matrix $\bfA$ and vector $\bfx$ (both real-valued) and is connected to $N$ worker nodes. For convenience, for arbitrary positive integer $n$, let $[n] \triangleq \{0, \dots, n-1\}$. The master node first partitions $\bfA$ into $\Delta$ block-rows (or submatrices) denoted by $\bfA_0, \cdots, \bfA_{\Delta - 1}$,  each of the same dimension. Following this, it generates submatrices denoted $\hat{\bfA}_{i,j}$, $i \in [N], j \in [\ell]$ (of the same dimension as the $\bfA_i$'s) such that worker $W_i$ is sent submatrices $\hat{\bfA}_{i,j}, j \in [\ell]$, and the vector $\bfx$. Let $\gamma \triangleq \ell/\Delta$. Then, each worker is assigned the equivalent of a $\gamma$-fraction of the rows of $\bfA$. In this paper, we assume $\bfA$ is large enough so that $\Delta$ can be chosen large enough. Throughout this paper the submatrices $\hat{\bfA}_{i,j}$ will be linear combinations of $\bfA_0, \dots, \bfA_{\Delta - 1}$, such that the master node only calculates scalar multiples and sums of $\bfA_i$'s. 


In what follows, we say that worker $W_i$ has processed a submatrix $\hat{\bfA}_{i,j}$ if it has calculated $\hat{\bfA}_{i,j} \bfx$.
A key feature of the distributed matrix-vector multiplication problem is that the matrices $\hat{\bfA}_{i,j}$ are processed sequentially in the order $\hat{\bfA}_{i,0}, \hat{\bfA}_{i,1}, \dots$, i.e.,
a worker node $W_i$ processes $\hat{\bfA}_{i,j}$ only if it has finished processing $\hat{\bfA}_{i,0}, \dots, \hat{\bfA}_{i,j-1}$. Each time a worker node computes a product or a block of consecutive products, it sends the result to the master node. Our system requirement dictates that the master node should be able to decode $\bfA \bfx$ as long as it receives a minimum number of products from the worker nodes. Fig. \ref{fig:Q_example} demonstrates a system we consider.

\begin{figure}[t]
	\centering
	\includegraphics[width=90mm]{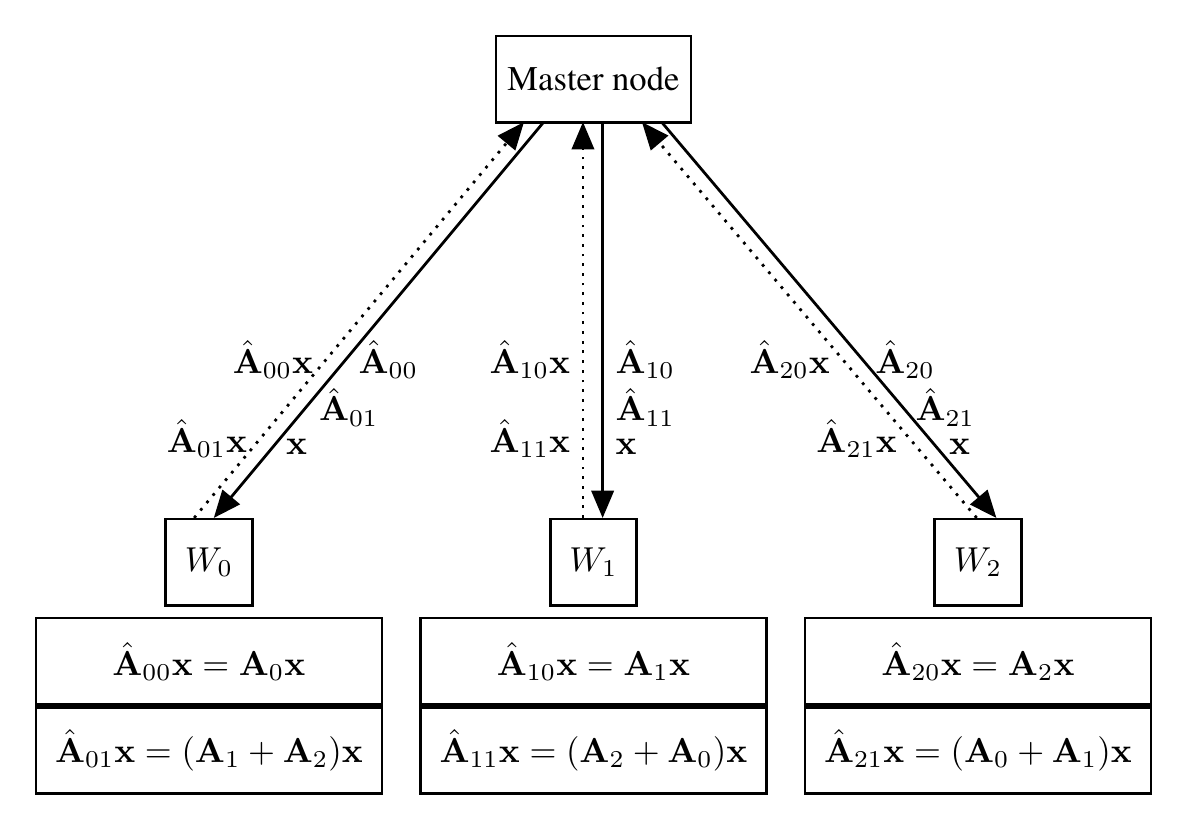}
	\caption{ {\small Matrix $\bfA$ is partitioned into three submatrices $\bfA_0, \bfA_1$ and $\bfA_3$ by rows. The master node transmits $\bfx$ and the $\hat{\bfA}_{ij}$ submatrices to the workers (shown by solid lines). The workers transmit the submatrix vector products back to the master node (shown by dotted lines). It can be verified that the master node can decode if any three submatrices are processed, while respecting the sequential computation order (from top to bottom).}}
	\label{fig:Q_example}
\end{figure}

%

It is evident that the properties of a given scheme depend upon the properties of the matrices $\hat{\bfA}_{i,j}$, $i \in [N], j \in [\ell]$. To specify this encoding we discuss constructions of collections of matrices that have certain desired properties. Some of our constructions are first designed over a finite field and then embedded into $\mathbb{R}$ using an appropriately defined procedure. Accordingly, we define certain rank conditions that depend on an underlying field of operation denoted by $\mathbb{K}$. We will explicitly specify $\mathbb{K}$ when discussing the constructions. Consider $N$ matrices $G_k$, $0\le k < N$ over $\mathbb{K}$ with dimension $\Delta \times \ell$ and let $G_k(i,j)$ represent the $(i,j)$-th entry of $G_k$. Let $\bfv = (v_0, \cdots, v_{N-1})$. We define the set 
\begin{align*}
	\Psi_{N,\ell,s}^{=Q_{\mathrm{b}}} = \left \{\bfv~\bigg{|}~ v_i \in [\ell], i\in [N], \sum_{i\in [N]} \left\lfloor \frac{v_i}{s} \right\rfloor =Q_{\mathrm{b}} \right \}.
\end{align*}
\begin{definition}{\it $s$-weak full-rank matrices.}
	\label{defn:full_rank_cond}
	Let $N, \ell, \Delta, s$ be positive integers such that $s$ divides $\ell$  and $\Delta$. Let $Q_{\mathrm{b}} = \Delta/s$.  Consider matrices $G_k, k \in [N]$, of dimension $\Delta \times \ell$. Let $\bfv \in \Psi_{N,\ell,s}^{=Q_{\mathrm{b}}}$ and let $\Delta_\bfv = \sum_{i \in [N]} v_i \geq Q_{\mathrm{b}} s$. If the $\Delta \times \Delta_\bfv$ matrix $\bfG$ composed of the first $v_0$ columns of $G_0$, the first $v_1$ columns of $G_1$, $\dots$, and the first $v_{N-1}$ columns of $G_{N-1}$, has full rank over $\mathbb{K}$, i.e., $\text{rank}_{\mathbb{K}}(\bfG)=\Delta$ for all $\bfv \in \Psi_{N,\ell,s}^{=Q_{\mathrm{b}}}$ we say the collection $\{G_k\}_{k=0}^{N-1}$ satisfies the $s$-weak full-rank condition. 
\end{definition}

The collection $\{G_k\}_{k=0}^{N-1}$ is used to obtain the $\ell$ submatrices stored in worker $W_k$ when $\mathbb{K} = \mathbb{R}$ as 
\begin{align*}
	\hat{\bfA}_{k,j} &= \sum_{i \in [\Delta]} G_k(i,j) \bfA_{i}, ~~\text{for~}k \in [N],j \in [\ell].
\end{align*}

Consider first the case $s=1$ and assume that worker $W_i$ has finished processing $v_i$ submatrices and $v_0+\cdots + v_{N-1} \geq \Delta$. Let $\bfG$ be as specified in the definition above. 
It is not too hard to see that the system requirement of decoding from any $Q_{\mathrm{b}} = \Delta$ submatrix-vector products is equivalent to the condition that $\bfG$ is full-rank over $\mathbb{R}$ for all possible patterns $(v_0, \cdots, v_{N-1})$. Thus, designing $\{G_k\}_{k=0}^{N-1}$ that satisfy Definition \ref{defn:full_rank_cond} is sufficient for the problem at hand.

Values of $s > 1$ correspond to  a relaxation of this condition. Specifically, suppose that each worker node returns the results in blocks of size $s$. For instance, worker node $W_i$ computes $\hat{\bfA}_{i,0}\bfx, \hat{\bfA}_{i,1} \bfx , \dots, \hat{\bfA}_{i,s-1}\bfx$ and then reports the result back to the master. Following this, it focuses on the next block $\hat{\bfA}_{i,s} \bfx, \dots, \hat{\bfA}_{i,2s-1} \bfx$, and so on. In this case, decoding by the master node is guaranteed if it receives any $Q_{\mathrm{b}}$ blocks of size $s$ (this explains our choice of subscript $\mathrm{b}$ in $Q_{\mathrm{b}}$). If the $s$-weak full-rank condition holds for $s=1$, we will refer to the system as satisfying the \emph{strong full-rank condition}. 

\begin{remark}
	When $s=1$, then the master node can recover $\bfA \bfx$ when any $Q_b$ submatrices have been processed across the $N$ workers, i.e., the worst case computational load on the system, measured at the granularity of a submatrix is $\Delta$. If $s > 1$, then the worst case computational load can be as high as
	\begin{align}
		\label{eq:comp_load}
		\Delta \left(1 + \frac{(N-1)(s-1)}{\Delta} \right).
	\end{align} For our constructions, the second term $\frac{(N-1)(s-1)}{\Delta}$ can be made as small as desired by choosing a large enough $\Delta$.
\end{remark}

%
\begin{example}
	Consider the system in Fig. \ref{fig:Q_example} with $N=3$, $\Delta = 3$, $\ell = 2$. Matrix $\bfA$ is partitioned into three submatrices by rows, $\bfA_0, \bfA_1, \bfA_2$. Each worker node is assigned two submatrices and the vector $\bfx$. The following real-valued matrices satisfy the conditions in Definition \ref{defn:full_rank_cond} for $s=1$ (see Fig. \ref{fig:Q_example} for the corresponding $\hat{\bfA}_{i,j}$ matrices).
	\begin{align*}
		G_0=\begin{bmatrix}
			1&0\\
			0&1\\
			0&1
		\end{bmatrix},~
		G_1=\begin{bmatrix}
			0&1\\
			1&0\\
			0&1
		\end{bmatrix}, \text{~and~}
		G_2=\begin{bmatrix}
			0&1\\
			0&1\\
			1&0
		\end{bmatrix}.
	\end{align*}
\end{example}

\section{Coded schemes satisfying \\the strong full rank condition}
\label{sec:strong_sec}
In this section, we present two schemes that satisfy the strong full-rank condition. The first scheme is essentially an embedding of an RS code in the matrix-vector multiplication framework and has appeared in \cite{lee2018speeding}. The second one is inspired by the constructions in \cite{rosenbloomT97, ganesan2007existence}.

Let $u(x) = \sum_{k=0}^d u_k x^k$ be a polynomial of degree $d$ with real coefficients, i.e., $u(x) \in \mathbb{R}[x]$ where $\mathbb{R}[x]$ denotes the ring of polynomials with real coefficients. Let $u^{(j)}(x)$ denote the $j$-th derivative of $u(x)$. It is evident that
\begin{align}
	u^{(j)}(x) = \sum_{k=0}^d u_k \binom{k}{j} j!~ x^{k-j}, \label{eq:real_poly_der}
\end{align}
where $\binom{k}{i} = 0$ if $k < i$.
Furthermore, note that we can also represent $u(x)$ by considering its Taylor series expansion around a point $\beta \in \mathbb{R}$, i.e.,
\begin{align}
	u(x) = \sum_{k=0}^d \frac{u^{(k)}(\beta)}{k!} (x - \beta)^k. \label{eq:real_Taylor_series}
\end{align}
It is well known that $u(x)$ has a zero of multiplicity $m$ at $\beta \in \mathbb{R}$ if and only if $u^{(i)}(\beta) = 0$ for $0 \leq i < m$ and $u^{(m)}(\beta) \neq 0$.

\subsection{RS-based scheme}
\label{sec:MDS}In the first scheme we simply choose the columns of $G_k$ for $k \in [N]$ to correspond to a polynomial of degree $\Delta -1$ being evaluated at distinct points in $\mathbb{R}$, i.e.,
\begin{align}
	\label{eq:MDS_Gi}
	G_k(i,j) = \beta_{k,j}^i, \text{~for~} i \in [\Delta], j \in [\ell].
\end{align}
where $\beta_{k,j} \in \mathbb{R}$ are distinct for $k \in [N], j \in [\ell]$. 

\subsection{UDM-based scheme}
\label{sec:UDM}
Our second construction works by choosing the columns of $G_k$ corresponding to the evaluations of a polynomial and its derivatives of order $1, \dots, \ell -1$. We first choose $N$ distinct real numbers $\beta_0, \dots, \beta_{N-1}$. For worker node $k$, we choose the $j$-th column in correspondence with the evaluation of the $j$-th derivative of a degree-$(\Delta-1)$ polynomial at value $\beta_k$ scaled by $j!$ ({\it cf.} Eq. (\ref{eq:real_poly_der})). Thus, for $k \in [N], i \in [\Delta]$ and $j \in [\ell]$, 
\begin{align}
	G_k(i,j) = \begin{cases}
		\binom{i}{j} \beta_k^{i-j} & \text{if $i \geq j$,}\\
		0 & \text{otherwise.}
	\end{cases}
	\label{eq:RT_assign}
\end{align}
We note here that there is another choice of matrix, denoted $G_{*}$ that can be used instead of the above choices for one of the workers.
For  $i \in [\Delta]$ and $j \in [\ell]$, we let
\begin{align}
G_{*}(i,j) = \begin{cases}
1, & \text{if $i = N-1-j$,}\\
0 & \text{otherwise.}
\end{cases}
	\label{eq:RT_anti_assign}
\end{align}
\subsection{Properties of the Coded Schemes}

\begin{claim}
	The $N$ matrices defined in Section \ref{sec:MDS} and Section \ref{sec:UDM} satisfy the strong full-rank condition in Definition \ref{defn:full_rank_cond}.
\end{claim}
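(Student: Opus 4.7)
The plan is to reinterpret the rank condition as an injectivity statement for an evaluation-type linear map on polynomials of degree at most $\Delta-1$, and then dispatch the two schemes separately: the RS case via the fact that a nonzero polynomial has at most $\Delta-1$ roots, and the UDM case via Hermite interpolation (a nonzero polynomial of degree less than $\sum_k v_k$ cannot have prescribed roots whose multiplicities sum to $\sum_k v_k$).

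Concretely, for $s=1$ one has $Q_{\mathrm{b}}=\Delta$ and $\Delta_{\bfv}=\sum_k v_k \geq \Delta$, so the matrix $\bfG$ from Definition~\ref{defn:full_rank_cond} is $\Delta \times \Delta_{\bfv}$ with at least $\Delta$ columns. Such a $\bfG$ has rank $\Delta$ iff the map $\bfc \mapsto \bfc^T \bfG$ from $\mathbb{R}^{\Delta}$ to $\mathbb{R}^{\Delta_{\bfv}}$ is injective. I would identify $\bfc=(c_0,\dots,c_{\Delta-1})$ with the polynomial $p(x)=\sum_{i=0}^{\Delta-1} c_i\,x^i \in \mathbb{R}[x]$ of degree at most $\Delta-1$ and reduce the claim to showing that $\bfc^T\bfG=\mathbf{0}$ forces $p\equiv 0$ in both schemes.

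For the RS-based scheme, (\ref{eq:MDS_Gi}) gives $\bfc^T G_k(\cdot,j)=p(\beta_{k,j})$, so the entries of $\bfc^T\bfG$ are the evaluations of $p$ at the $\Delta_{\bfv}$ distinct points $\{\beta_{k,j} : k\in[N],\ 0\leq j < v_k\}$; since $\deg p \leq \Delta-1 < \Delta_{\bfv}$, vanishing at all of them forces $p\equiv 0$. For the UDM-based scheme, combining (\ref{eq:RT_assign}) with (\ref{eq:real_poly_der}) yields $\bfc^T G_k(\cdot,j) = p^{(j)}(\beta_k)/j!$, so taking the first $v_k$ columns of $G_k$ imposes $p^{(j)}(\beta_k)=0$ for $j=0,\dots,v_k-1$, i.e., $\beta_k$ is a zero of $p$ of multiplicity at least $v_k$. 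Since the $\beta_k$ are distinct, $\prod_{k\in[N]} (x-\beta_k)^{v_k}$ must divide $p$, but this product has degree $\Delta_{\bfv}\geq \Delta > \deg p$, forcing $p\equiv 0$.

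I do not foresee a serious obstacle: once the polynomial/evaluation dictionary is set up, each case collapses to a short degree count. The main care point is matching ``first $v_k$ columns of $G_k$'' with ``$\beta_k$ is a zero of $p$ of multiplicity at least $v_k$'' in the UDM case, which relies on the $j!$ scaling implicit in (\ref{eq:RT_assign}) so that the columns align with Taylor coefficients as in (\ref{eq:real_Taylor_series}), and in keeping the strict inequality $\deg p < \Delta_{\bfv}$ throughout so the divisibility argument is non-vacuous.
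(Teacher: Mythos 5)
Your argument is correct for the two constructions it addresses, but it takes a genuinely different route from the paper. The paper disposes of the RS case by observing that $\bfG$ is a Vandermonde matrix, and of the UDM case by invoking the generalized Vandermonde determinant formula of \cite{sobczyk2002}; your proof replaces both citations with a single self-contained left-null-space argument. Identifying a putative left kernel vector $\bfc$ with a polynomial $p$ of degree at most $\Delta-1$, the RS columns become point evaluations (too many distinct roots) and, via (\ref{eq:real_poly_der}), the first $v_k$ UDM columns become the conditions $p^{(j)}(\beta_k)=0$ for $j<v_k$, i.e., a root of multiplicity at least $v_k$ at $\beta_k$; the degree count $\sum_k v_k=\Delta>\deg p$ then kills $p$. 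This is the standard proof that confluent Vandermonde matrices are nonsingular, and it buys transparency and independence from the determinant identity, at the cost of not producing the explicit determinant that \cite{sobczyk2002} provides.

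One genuine omission: Section \ref{sec:UDM} also allows one worker to be assigned the matrix $G_{*}$ of (\ref{eq:RT_anti_assign}), and the paper's proof explicitly handles this variant via the block decomposition $\bfG=\bigl[\begin{smallmatrix}\bfB_1 & 0\\ \bfB_2 & \bfB_3\end{smallmatrix}\bigr]$ with $\bfB_3$ an anti-diagonal identity. Your proposal never mentions $G_{*}$, so as written it does not establish the full claim. The fix is immediate in your own framework: the $j$-th column of $G_{*}$ extracts the coefficient $c_{\Delta-1-j}$, so annihilating its first $v_{N-1}$ columns forces the top $v_{N-1}$ coefficients of $p$ to vanish, leaving $\deg p\le \Delta-1-v_{N-1}$, while the remaining workers still impose roots of total multiplicity $\Delta-v_{N-1}>\deg p$. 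You should add this case (or restrict the claim you are proving) to match the statement the paper actually establishes.
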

\begin{proof}
	Consider any vector pattern $\bfv=(v_0, \ldots, v_{N-1})$ such that $v_0+\cdots + v_{N-1} = \Delta$. Let $\bfG$ be composed of the first $v_k$ columns of $G_k$, $k=0,\ldots, N-1$.
	For the RS-based construction in (\ref{eq:MDS_Gi}), it is evident that $\bfG$ is a Vandermonde matrix.
	As the $\beta_{k,j}$'s are distinct, $\bfG$ has full rank.
	For the UDM-based scheme, if all workers are chosen based on (\ref{eq:RT_assign}), the result follows from the determinant of a generalized Vandermonde determinant \cite{sobczyk2002}. On the other hand, assume without loss of generality that the $(N-1)$-th worker is assigned the $G_*$ matrix ({\it cf.} Eq. (\ref{eq:RT_anti_assign})). In this case, $\bfG$ can be written as
	\begin{align*}
	\bfG=\begin{bmatrix}
	\bfB_1 &0\\
	\bfB_2 &\bfB_3
	\end{bmatrix},
	\end{align*}
	where $\bfB_3$ is a $v_{N-1}\times v_{N-1}$ matrix with ones on the anti-diagonal and $\bfB = [\bfB_1^T~\bfB_2^T]^T$ are composed of the first $v_k$ columns of $G_k$, $k=0, \cdots, N-2$. Once again, the generalized Vandermonde determinant formula \cite{sobczyk2002} shows that $\bfB_1$ is full rank. This coupled with the fact that $\bfB_3$ is also full rank, gives us the required result.
\end{proof}


It is evident that the above constructions satisfy the strong full-rank condition. However, experimental results (see also \cite{gautschi1990stable}) show that these constructions result in badly conditioned $\bfG$ matrices in the worst case. In addition, both (\ref{eq:MDS_Gi}) and (\ref{eq:RT_assign}) result in dense linear combinations of $\bfA$, rendering them unsuitable in the scenario when $\bfA$ is sparse. Nevertheless, the UDM-based construction (\ref{eq:RT_assign}), provides a systematic way to take into account the sequential processing order of the worker nodes.
\begin{remark}
	The RS-based scheme is in one-to-one correspondence with polynomial interpolation from any $\Delta$ (out of $N \ell$) distinct evaluation points. The UDM-based scheme uses much fewer evaluation points (only $N$) but is equivalent to interpolating a polynomial with roots of higher multiplicity.
\end{remark}

\section{Coded schemes satisfying \\ the weak full rank condition}
\label{sec:weak_full_rank}

Our second class of constructions produces schemes that satisfy the $s$-weak full-rank condition. However, they have excellent numerical stability and are much sparser than those discussed in Section \ref{sec:strong_sec}. These schemes are obtained by first constructing a collection of matrices over a finite field $\bbF_{p^n}$ (where $p$ is prime) and then embedding the finite field matrices into real field by companion matrix.
Towards this end, let $\tilde{u}(x) = \sum_{k=0}^d \tilde{u}_k x^k$ be a polynomial with coefficients from $\bbF_{p^n}$, i.e., $\tilde{u}(x) \in \bbF_{p^n}[x]$. The $i$-th Hasse derivative\footnote{To avoid confusion with the case of real-valued polynomials, we superscript the finite field polynomials with ~\textasciitilde~ and represent the Hasse derivatives with square brackets. } of $\tilde{u}(x)$ is defined as
\begin{align}
	\label{eq:Hesse_UDM}
	\tilde{u}^{[i]}(x) = \sum_{k=0}^d \binom{k}{i} \tilde{u}_k x^{k-i},
\end{align}
where we emphasize that the quantity $\binom{k}{i}$ is interpreted as a element of $\bbF_p$. In this scenario, it can be shown that $\tilde{u}(x)$ has a zero of multiplicity $m$ at a point $\beta \in \bbF_{p^n}$ (or in an appropriate extension field) if $\tilde{u}^{[i]}(\beta) = 0$ for $0 \leq i < m$ and $\tilde{u}^{[m]}(\beta) \neq 0$.

The work of \cite{rosenbloomT97,ganesan2007existence} shows that the following matrices $G_k, k \in [N]$, satisfy the strong full-rank condition over $\mathbb{K} = \bbF_{p^n}$, assuming $p^n\ge N + 1$.
\begin{align}
	G_k(i,j) = \begin{cases}
		\binom{i}{j} \beta_k^{i-j} & \text{(if $i \geq j$)}\\
		0 & \text{otherwise.}
	\end{cases}
	\label{eq:RT_ff_assign}
\end{align}
where $\beta_k, k \in [N]$ are distinct non-zero elements in $\bbF_{p^n}$. We remark here that while the expression above is the same as the one in (\ref{eq:RT_assign}), the elements of (\ref{eq:RT_ff_assign}) lie in $\bbF_{p^n}$.

One reason for considering the matrices in (\ref{eq:RT_ff_assign}) is as follows. Suppose that we operate over $\bbF_{2^n}$, i.e., $p=2$. Note that the calculation in (\ref{eq:RT_ff_assign}) is equivalent to computing $\binom{i}{j}$ over the integers and reducing it modulo $2$. In particular, this implies that whenever $\binom{i}{j}$ is even, the corresponding matrix entry will be zero. Thus, over finite fields, the $\{G_k\}_{k=0}^{N-1}$ matrices obtained using (\ref{eq:RT_ff_assign}) are likely much sparser than those obtained from (\ref{eq:RT_assign}).

\begin{example}
	\label{ex:UDM_nonbin}
	Let $p = 2, n = 3, \ell = 3, \Delta = 4, N = 6$. Consider the polynomial $\tilde{u}(x) = \tilde{u}_0 + \tilde{u}_1x + \tilde{u}_2x^2 + \tilde{u}_3x^3$ over $\bbF_{8}$. Its $i$-th Hasse derivatives, $i = 0, 1, 2$ are
	\begin{align*}
		\tilde{u}^{[0]}(x) &= \tilde{u}_0 + \tilde{u}_1x + \tilde{u}_2x^2 + \tilde{u}_3x^3,\\
		\tilde{u}^{[1]}(x) &= \tilde{u}_1 + \tilde{u}_3x^2, \text{~and}\\
		\tilde{u}^{[2]}(x) & = \tilde{u}_2 + \tilde{u}_3x.
	\end{align*}
	Note here that $\tilde{u}^{[1]}(x)$ has only two non-zero coefficients, whereas when considering derivatives over the reals, it will have three non-zero coefficients.
	Then,
	\begin{align}
		\label{eq:HasseEx}
		G_i&=\begin{bmatrix}
			1 & 0 & 0\\
			\beta_i & 1 & 0\\
			\beta_i^2 & 0 & 1\\
			\beta_i^3 & \beta_i^2 & \beta_i
		\end{bmatrix},
	\end{align}
	where $\beta_i \in \bbF_8$ and $\beta_i$ values are distinct for $i \in [N]$.
\end{example}

A natural question arises if it is possible to somehow ``embed" the matrices defined in (\ref{eq:RT_ff_assign}) into corresponding real matrices such that the conditions of Definition \ref{defn:full_rank_cond} hold (for real matrices).
This does not appear to be a straightforward problem.  For example, simply requiring distinct $\beta_k$'s is not sufficient. For instance, if we choose $\beta_0 = 1$ and $\beta_1 = -1$ then the matrix
$$
	\bfG=\begin{bmatrix}
	1 & 0 & 1 & 0\\
	1 & 1 & -1 & 1 \\
	1 & 0 & 1 & 0 \\
	1 & 1 & -1 & 1
    \end{bmatrix}
$$
obtained by choosing the first two columns of $G_0$ and the first two columns of $G_1$ is singular.

\begin{remark}
If the $\beta_i, i \in [N]$ are chosen randomly from a large enough subset of $\mathbb{R}$, then we can assert that the collection will satisfy the strong full-rank property with high probability. To see this, let $\beta_i \in \bbF_{p^n}, i \in [N]$ be indeterminates for now and consider $\bfG$ for any pattern $(v_0, \cdots, v_{N-1})$. The determinant of $\bfG$ is a multivariate polynomial $\tilde{\Lambda}(\beta_0, \dots, \beta_{N-1})$ with coefficients from $\bbF_p$. The results of \cite{rosenbloomT97,ganesan2007existence} certainly imply that $\tilde{\Lambda}(\beta_0, \dots, \beta_{N-1})$ is not identically zero. Now, consider the determinant (polynomial) of $\bfG$ denoted $\Lambda(\beta_0, \dots, \beta_{N-1})$  obtained by considering $\beta_i \in \mathbb{R}, i \in [N]$, i.e., $\Lambda(\beta_0, \dots, \beta_{N-1})$ has integer coefficients. Clearly, $\tilde{\Lambda}(\beta_0, \dots, \beta_{N-1})$ can be obtained by reducing each coefficient of $\Lambda(\beta_0, \dots, \beta_{N-1})$ modulo $p$. Therefore, $\Lambda(\beta_0, \dots, \beta_{N-1})$ is also not identically zero. It follows that the product of all the real multivariate polynomials corresponding to the relevant $\bfG$'s is not identically zero. The result then follows, by choosing a large enough subset of the reals and applying the Schwartz-Zippel lemma.
\end{remark}

Next, we utilize a representation of $\bbF_{p^n}$ by $n \times n$ matrices over $\bbF_p$ \cite{ward94matrix}.
Let $\bbF_p[x]$ denote the ring of polynomials in $x$ with coefficients from $\bbF_p$. 
Let $\alpha$ be a primitive element in $\mathbb{F}_{p^n}$ and let $\pi_{\alpha}(x) = x^n + \sum_{i=0}^{n-1}\pi_i x^i \in \bbF_p[x]$ denote the primitive polynomial associated with $\alpha$. The $n\times n$ companion matrix (over $\bbF_p$) associated with $\pi_{\alpha}(x)$ is
\begin{align}
	\label{eq:companion}
	C = \begin{bmatrix}
		0 & 0 & \cdots & 0 & -\pi_0\\
		1 & 0 & \cdots & 0 & -\pi_1\\
		0 & 1 & \cdots & 0 & -\pi_2\\
		\vdots & \vdots & \ddots & \vdots & \vdots\\
		0 & 0 & \vdots & 1 & -\pi_{n-1}
	\end{bmatrix}.
\end{align}

Define $\frakC(p, n)=\{0, I, C, C^2,\cdots, C^{p^{n}-2}\}$ with matrix addition and multiplication over $\bbF_p$, where $0$ denotes $n\times n$ zero matrix and $I$ denotes $n\times n$ identity matrix. Then it is well-known \cite{ward94matrix} that the $\frakC(p, n)$  forms a finite field of size $p^n$ and is therefore isomorphic to $\bbF_{p^n}$. In particular, the mapping $\zeta(\alpha^l)=C^l$, $\zeta(0)=0$, maps the elements in $\bbF_{p^n}$ to their corresponding matrix representation. In this work, we need another isomorphism. The elements of $\bbF_{p^n}$ are represented by polynomials in $\alpha$ of degree smaller than $n$ with regular polynomial addition and multiplication being reduced to lower powers by using $\pi_\alpha(\alpha) = 0$. Let $\Gamma: \bbF_{p^n} \rightarrow \bbF^n_{p}$ represent the mapping of a polynomial $a(\alpha)$ to its vector representation. The addition of $a_1(\alpha)$ and $a_2(\alpha)$ is mapped to $\Gamma(a_1)+\Gamma(a_2)$. The product of $a_1(\alpha)$ and $a_2(\alpha)$ is mapped to $a_1(C) \Gamma(a_2)$.

To see that this is a valid isomorphism, we have the following argument that establishes the equivalence of multiplication with $\alpha$ in $\bbF_{p^n}$ and left multiplication by $C$.
Let $b_0+b_1\alpha + \cdots + b_{n-1}\alpha^{n-1}$ be an element of $\bbF_{p^n}$. Then
\begin{align*}
&\alpha(b_0+b_1\alpha+\cdots + b_{n-1}\alpha^{n-1})\\
=&b_0\alpha+b_1\alpha^2+\cdots+b_{n-2}\alpha^{n-1}-b_{n-1}(\sum_{i=0}^{n-1}\pi_i\alpha^i)\\
=&-\pi_0b_{n-1}+(b_0-\pi_1b_{n-1})\alpha+(b_1-\pi_2b_{n-1})\alpha^2\\
&+\cdots+(b_{n-2}-\pi_{n-1}b_{n-1})\alpha^{n-1}.
\end{align*}
It can be seen that $C[b_0 ~b_1~ \cdots~ b_{m-1}]^T$ gives the same result. The isomorphism of $\bbF_{p^n}$  and $\frakC(p, n)$ shows that each element of $\bbF_{p^n}$ can be represented as a power of $C$. The result is then obtained by inductively applying the equivalence presented above.

\begin{lemma}
	\label{lemma:bin_prj_field}
	Let $\bfB$ be a $n \times n$ matrix with entries from $\bbF_{p^m}$. Let $\tilde{\bfB}$ denote the $mn \times mn$ matrix obtained by applying the map $\zeta$ to each entry of $\bfB$. Note that $\det(\bfB) \in \bbF_{p^m}$ and $\det(\tilde{\bfB}) \in \bbF_p$. We claim that
	\begin{align}
		\det(\bfB) \neq 0 \Longleftrightarrow \det(\tilde{\bfB}) \neq 0. \label{eq:det_eq}
	\end{align}
	Furthermore, let $\hat{\bfB}$ denote the $mn \times mn$ matrix over the integers $\mathbb{Z}$ obtained by mapping each element of $\tilde{\bfB}$ to the corresponding integer in $\{0, \dots, p-1\}$.  If $\det(\bfB) \neq 0$ we have $\det(\hat{\bfB}) \neq 0$ over the reals.
\end{lemma}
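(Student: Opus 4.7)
The plan is to exploit the ring isomorphism between $\bbF_{p^m}$ and the companion-matrix algebra $\frakC(p,m)$ established in the preceding paragraphs, together with the compatibility between $\zeta$ and $\Gamma$ (namely, the identity $\Gamma(ab) = \zeta(a)\,\Gamma(b)$ for all $a,b \in \bbF_{p^m}$, which is exactly what the displayed computation involving $\alpha$ and $C$ establishes, promoted from the case $a = \alpha^k$ by linearity). The strategy is to transfer the singularity question for $\bfB$ to a kernel-emptiness question for $\tilde{\bfB}$, and then handle $\hat{\bfB}$ via reduction modulo $p$.

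First, for any column vector $\bfv = (v_0,\dots,v_{n-1})^T \in \bbF_{p^m}^n$, I would define its stacked image $\tilde{\bfv} \in \bbF_p^{mn}$ by concatenating $\Gamma(v_0), \dots, \Gamma(v_{n-1})$. A block-row calculation then yields
\begin{align*}
(\tilde{\bfB}\,\tilde{\bfv})_i \;=\; \sum_{j} \zeta(\bfB_{ij}) \Gamma(v_j) \;=\; \sum_{j} \Gamma(\bfB_{ij} v_j) \;=\; \Gamma\Bigl(\sum_{j} \bfB_{ij} v_j\Bigr) \;=\; \Gamma\bigl((\bfB\bfv)_i\bigr),
\end{align*}
where the $i$-th block is taken in the obvious way. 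Hence $\tilde{\bfB}\tilde{\bfv} = \widetilde{(\bfB\bfv)}$. Because the map $\bfv \mapsto \tilde{\bfv}$ is an $\bbF_p$-linear bijection, we conclude that $\bfB$ has a nonzero kernel over $\bbF_{p^m}$ iff $\tilde{\bfB}$ has a nonzero kernel over $\bbF_p$, which is exactly (\ref{eq:det_eq}).

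For the integer-lifting claim, I would use the Leibniz expansion: $\det(\hat{\bfB})$ is an integer polynomial (with coefficients $\pm 1$) in the entries of $\hat{\bfB}$, each of which lies in $\{0,\ldots,p-1\}$. Since reduction modulo $p$ is a ring homomorphism $\mathbb{Z} \to \bbF_p$, we have $\det(\hat{\bfB}) \bmod p = \det(\tilde{\bfB})$. Assuming $\det(\bfB) \neq 0$, the equivalence (\ref{eq:det_eq}) gives $\det(\tilde{\bfB}) \neq 0$ in $\bbF_p$; hence $\det(\hat{\bfB})$ is an integer not divisible by $p$, and is in particular nonzero as a real number.

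The only delicate step is the block-compatibility identity $\tilde{\bfB}\tilde{\bfv} = \widetilde{(\bfB\bfv)}$; once this is in hand, both halves of the lemma are essentially immediate. I do not expect to need any deeper structural result (such as a norm-form expression relating $\det(\bfB)$ and $\det(\tilde{\bfB})$ as a polynomial identity), and in particular the integer lift requires nothing beyond elementary modular arithmetic applied to the Leibniz formula, which avoids having to reason about the sign or magnitude of $\det(\hat{\bfB})$ directly.
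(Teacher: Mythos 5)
Your proposal is correct and follows essentially the same route as the paper: the equivalence (\ref{eq:det_eq}) is established by transporting a kernel vector between $\bbF_{p^m}^n$ and $\bbF_p^{mn}$ via the isomorphism (your identity $\tilde{\bfB}\tilde{\bfv}=\widetilde{(\bfB\bfv)}$ is exactly the step the paper invokes more tersely when passing from $\tilde{\bfB}\tilde{\bfy}=0$ to $\bfB\bfy=0$), and the integer-lift claim is handled identically by noting that $\det(\hat{\bfB}) \bmod p = \det(\tilde{\bfB}) \neq 0$. Your write-up is merely more explicit about the block-compatibility of $\zeta$ and $\Gamma$, which the paper leaves implicit.
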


\begin{proof}
		Suppose that $\det(\bfB) \neq 0$ but $\det(\tilde{\bfB}) = 0$. Note that this implies that there exists a non-zero vector $\tilde{\bfy} = [\tilde{y}_1^T~\tilde{y}_2^T~\dots~\tilde{y}_n^T]^T \in \bbF_{p}^{mn}$ where $\tilde{y}_i \in \bbF^m_p$ such that
		\begin{align}
		\tilde{\bfB} \tilde{\bfy}  = 0. \label{eq:eq1}
		\end{align}
		Now we use the isomorphism presented above. Let $\bfy = [{y}_1~{y}_2~\dots~{y}_n]^T \in \bbF_{p^m}^n$ be obtained by applying $\Gamma^{-1}$ to $\tilde{\bfy}$. Therefore, relation (\ref{eq:eq1}), equivalently implies that
		\begin{align}
		\bfB {\bfy}  = 0, \label{eq:eq2}
		\end{align}
		where the above equation is understood to be over $\bbF_{p^m}$. However, this is a contradiction since ${\bfy} \neq 0$ and $\det(\bfB) \neq 0$. The reverse conclusion can be obtained in a similar manner.
	Note that $\det(\tilde{\bfB}) \in \bbF_p$. It can also be equivalently computed by finding $\det(\hat{\bfB})$ over reals and reducing the result modulo $p$. Thus, we have that $\det(\hat{\bfB}) \neq 0$ over reals.
\end{proof}

We now present the construction of systems that satisfy the $s$-weak full-rank property.

\begin{table*}[t]
	\centering
	\caption{Performance comparison for system with $N=6, \gamma = 3/4$ and $Q_{\mathrm{b}} = 4$.}
	\label{table:perf}
	\begin{tabular}{|c||c|c|c|c|c|c|}
		\hline
		Scheme&$\Delta$&$\ell$ & s & Max. Cond. Num. & Avg. Cond. Num. & Density of $G_k$\\
		\hline
		RS based scheme& 4 & 3 & &$ 5.1\times 10^3$ & $ 334$ & $100\%$\\
		\hline
		RS $+$ Companion Matrix of $\mathrm{GF}(2^5)$& 20 & 15 & 5 & $3.4\times 10^4$ & $814$ & $51\%$\\
		\hline
		RS $+$ Embedding from $\mathrm{GF}(19)$ & 4 & 3 & & $ 7.3\times 10^3$ & $312$ & $100\%$\\
		\hline
		RS $+$ Companion Matrix of $\mathrm{GF}(3^3)$& 12 & 9 & 3 & $1.5\times 10^3$ & $98$ & $71\%$\\
		\hline
		UDM-based scheme& 4 & 3 & & $ 6.1\times 10^3$ & $265$ & $75\%$\\
		\hline
		UDM $+$ Embedding from $\mathrm{GF}(7)$& 4 & 3& & $ 1.5\times 10^3$ & $98$ & $75\%$\\
		\hline
		UDM $+$ Companion Matrix of $\mathrm{GF}(2^3)$& 12 & 9 & 3 & $583$ & $99$ & $32\%$\\
		\hline
		UDM $+$ Companion Matrix of $\mathrm{GF}(3^2)$& 8 & 6 & 2 & $182$ & $23$& $36\%$\\
		\hline
	\end{tabular}
\end{table*}

\begin{table*}[t]
	\centering
	\caption{Performance comparison of different extension fields for a system with $N=15, \gamma = 1/2$ and $Q_{\mathrm{b}} = 4$.}
	\label{table:perf2}
	\begin{tabular}{|c||c|c|c|c|c|c|c|}
		\hline
		Scheme&$\Delta$&$\ell$ & s & Max. Cond. Num. & Avg. Cond. Num. & Density of $G_k$\\
		\hline
		RS $+$ Companion Matrix $\mathrm{GF}(2^5)$ & 20 & 10 & 5 & $2.8\times 10^5$& $751$ & $53\%$\\
		\hline
		RS $+$ Companion Matrix $\mathrm{GF}(5^3)$ & 12 & 6 & 3 & $1.1\times 10^5$& $183$ & $83\%$\\
		\hline
		RS $+$ Companion Matrix $\mathrm{GF}(3^4)$ & 16 & 8 & 4 & $3.5\times 10^4$& $202$ & $67\%$\\
		\hline
		UDM $+$ Companion Matrix $\mathrm{GF}(2^4)$& 16 & 8 & 4 & $3.7\times 10^4$ & $286$ & $33\%$\\
		\hline
		UDM $+$ Companion Matrix $\mathrm{GF}(5^2)$& 8 & 4 & 2 & $1.1\times 10^4$ & $86$ & $62\%$\\
		\hline
		UDM $+$ Companion Matrix $\mathrm{GF}(3^3)$& 12 & 6 & 3 & $624$ & $96$ & $41\%$\\
		\hline
	\end{tabular}
\end{table*}

\begin{lemma}
	\label{lemma:bin_proj_code}
	Let $G_k$, $0\le k<N$, be a collection of $N$ matrices with size $\Delta \times \ell$ over $\bbF_{p^n}$ that satisfy  the strong full-rank property. Consider the $N$ matrices $G'_k$ of dimension $n\Delta  \times n\ell$ over $\bbF_p$, where $G'_k$ is obtained by applying the mapping $\zeta$ to each entry of $G_k$. Then, the collection $\{G'_k\}_{k=0}^{N-1}$ satisfies the $n$-weak full rank condition over $\mathbb{R}$.
\end{lemma}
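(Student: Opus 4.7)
The plan is to reduce the $n$-weak full-rank condition to the square case already handled by Lemma~\ref{lemma:bin_prj_field}. Fix any $\bfv = (v_0, \ldots, v_{N-1}) \in \Psi_{N,n\ell,n}^{=\Delta}$ and write $v_i = n q_i + r_i$ with $0 \le r_i < n$; the definition of $\Psi_{N,n\ell,n}^{=\Delta}$ gives $\sum_{i \in [N]} q_i = \Delta$. The candidate matrix $\bfG'$ formed by the first $v_i$ columns of $G'_i$ has $n\Delta$ rows and $\sum_i v_i \ge n\Delta$ columns, so it suffices to exhibit a full-rank $n\Delta \times n\Delta$ submatrix. The natural choice is $\bfG'_0$, obtained by retaining only the first $n q_i$ columns of each $G'_i$ and discarding the $r_i$ trailing columns in each block.

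The key structural observation is that since $\zeta$ is applied entry-wise, each column of $G_i$ lifts to a consecutive block of $n$ columns in $G'_i$; therefore the first $n q_i$ columns of $G'_i$ are exactly the entry-wise $\zeta$-image of the first $q_i$ columns of $G_i$. Concatenation commutes with entry-wise mapping, so $\bfG'_0$ is the $\zeta$-image of the $\Delta \times \Delta$ matrix $\bfG_0$ over $\bbF_{p^n}$ obtained by concatenating the first $q_i$ columns of the $G_i$'s. Since $\sum_i q_i = \Delta$, the strong full-rank hypothesis on $\{G_k\}_{k=0}^{N-1}$ applied to $\bfG_0$ yields $\det(\bfG_0) \neq 0$ in $\bbF_{p^n}$.

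Next I would invoke Lemma~\ref{lemma:bin_prj_field} with the roles of its $n$ and $m$ played by $\Delta$ and $n$: the conclusion that the integer lift $\hat{\bfG}_0$ of $\bfG'_0$ satisfies $\det(\hat{\bfG}_0) \neq 0$ over $\mathbb{R}$ immediately gives $\bfG'_0$ rank $n\Delta$ over $\mathbb{R}$, and therefore so does the larger matrix $\bfG'$. The only genuinely delicate step — and the one I expect to verify most carefully — is the indexing argument showing that selecting the first $n q_i$ columns of $G'_i$ corresponds precisely, under the block structure induced by $\zeta$, to $\zeta$-mapping the first $q_i$ columns of $G_i$. Once this alignment is explicit, the transfer of nonzero determinant from $\bbF_{p^n}$ through $\bbF_p$ and onto $\mathbb{R}$ is delivered in full by Lemma~\ref{lemma:bin_prj_field}.
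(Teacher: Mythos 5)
Your proof is correct and follows the same route as the paper, which simply states that the lemma is an immediate consequence of Lemma~\ref{lemma:bin_prj_field}; you have usefully spelled out the details that the paper leaves implicit, namely the reduction of a pattern $\bfv$ with $v_i = nq_i + r_i$ to the square submatrix given by the first $nq_i$ columns of each $G'_i$, the identification of that submatrix with the entrywise $\zeta$-image of the concatenation of the first $q_i$ columns of the $G_i$'s, and the application of Lemma~\ref{lemma:bin_prj_field} with its $(n,m)$ playing the roles of $(\Delta, n)$.
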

\begin{proof}
	This is an immediate consequence of Lemma \ref{lemma:bin_prj_field}.
\end{proof}

\begin{example}
	Consider collection of $N = 6$ matrices presented in Example \ref{ex:UDM_nonbin} over $\bbF_{8}$. Let the primitive polynomial over $\bbF_{8}$ be $\pi_\alpha(x) = x^3+x+1$. 
	Suppose that $\beta_k = \alpha^k, k \in [N]$. Then,
	\begin{align*}
		G_k=\begin{bmatrix}
			1 & 0 & 0\\
			\alpha^k & 1 & 0\\
			\alpha^{2k} & 0 & 1\\
			\alpha^{3k} & \alpha^{2k} & \alpha^k
		\end{bmatrix},
		&&
		G'_k=\begin{bmatrix}
			I & 0 & 0\\
			C^k & I & 0\\
			C^{2k} & 0 & I\\
			C^{3k} & C^{2k} & C^k
		\end{bmatrix}.
	\end{align*}
	By Lemma \ref{lemma:bin_proj_code}, the collection $\{G'_k\}_{k=0}^{5}$ satisfies the $3$-weak full rank condition over $\mathbb{R}$.
	
	%
\end{example}
We note here that Lemma \ref{lemma:bin_proj_code} can also be applied to an RS code defined over a finite field.

\begin{remark}
	Our proposed scheme requires us to operate over an extension field large enough so that $p^n \geq N+1$ for the UDM based approach and $p^n \geq N\ell + 1$ for the RS-based approach. Thus, the second term in the worst case computational load ({\it cf.} Eq. (\ref{eq:comp_load})) can be made as small as desired by choosing $\Delta$ large enough. 
	Increasing $\Delta$ does come at the cost of high condition numbers ({\it cf.} Section \ref{sec:results}).
\end{remark}

\section{Comparisons of the different schemes}
\label{sec:results}

In this section, we compare the performance of the different schemes that have been proposed in this work. For each scheme, we construct all possible matrices $\bfG$ based on $\{G_k\}_{k=0}^{N-1}$ and $\Psi_{N,\ell,s}^{=Q_{\mathrm{b}}}$ and calculate their condition number.  We report the maximum and average condition number of all such possible $\bfG$'s. Furthermore, we also report the average number of non-zero elements in the $G_k$ matrices for each collection.


In Table \ref{table:perf} we report results for a system with $N = 6$ workers and storage capacity for each worker $\gamma = 3/4$. 
For the ``RS-based scheme", we set $\beta_{i,j}$, $i = 0, \ldots, 5$, $j = 0,1,2$, in  (\ref{eq:MDS_Gi}) to 18 equally spaced reals within the interval $[-1, 1]$. For the ``UDM-based scheme", we set $\beta_k$ in (\ref{eq:RT_assign}), $k = 0, \ldots, 5$ to
6 equally spaced reals within the interval $[-1, 1]$.
For ``RS + Embedding from $\mathrm{GF}(19)$", we construct (\ref{eq:MDS_Gi}) over $\mathrm{GF}(19)$. Note that the field size is the least prime number that is greater or equal to the number of evaluation points. Then we embed (\ref{eq:MDS_Gi}) into $\mathbb{R}$ by using the natural mapping of $\mathrm{GF}(19)$ into the integers. We construct ``UDM + Embedding from $\mathrm{GF}(7)$" in a similar manner.
It can be seen that the condition number of ``UDM scheme + Embedding  from $GF(7)$" is the lowest when compared the other three schemes discussed thus far.

The other rows of Table \ref{table:perf} correspond to the companion matrix approach. In each of these cases we first design the RS-based or the UDM-based scheme over the corresponding extension field and then use the companion matrix idea introduced in Section \ref{sec:weak_full_rank}.
One can observe that the RS + Companion matrix schemes typically have high condition number. This is because the size of the companion matrix needs to be large enough to accommodate $N \ell$ evaluation points. The UDM + Companion matrix schemes can work with extension fields larger than $N$, so their companion matrices tend to be smaller. Another advantage of the companion matrix approach is that the schemes are much sparser. Indeed, the ``UDM + Companion matrix $\mathrm{GF}(3^2)$" in Table \ref{table:perf} not only has a very low worst case condition number but also a sparsity level of 36\% which is the second lowest among all the schemes.

To better understand the performance corresponding to different choices of extension field, we consider a larger system with $N = 15$, $\gamma = 1/2$ in Table \ref{table:perf2}. It can be observed that the RS-based scheme is worse than the UDM-based scheme. Another observation is that the ``UDM + Companion matrix $\mathrm{GF}(3^3)$" has the lowest condition number and the $G_k$ matrices become sparser when the size of the companion matrix increases.

\end{document}